\documentclass{ceurart}


\newcommand{\id}{{\tt id}}
\newcommand{\source}{{\tt s}}
\newcommand{\target}{{\tt t}}
\newcommand{\Monoid}{{\tt M}}
\newcommand{\ToffoliM}{{\tt t}}
\newcommand{\leftM}{{\tt l}}
\newcommand{\rightM}{{\tt r}}

\newcommand{\N}{{\mathbb{N}}}

\newcommand{\setC}{{\mathcal C}}
\newcommand{\setD}{{\mathcal D}}

\newcommand{\ToffoliCategory}{ {\scalebox{.9}[1.0]{\bf Toff}} }
\newcommand{\Move}{ {\scalebox{.9}[1.0]{\bf Move}} }

\sloppy

\usepackage[dvipsnames]{xcolor} 
\definecolor{citegreen}{rgb}{0,0.5,0.2} 
\definecolor{codegreen}{rgb}{0,0.6,0}

\usepackage{hyperref} 
\hypersetup{ 
    colorlinks=true,
    linkcolor=blue,
    filecolor=magenta,
    urlcolor=cyan,
    citecolor=citegreen,
    bookmarks=true,
}
\usepackage{amsmath}
\usepackage{amssymb}
\usepackage{amsthm}

\usepackage{stmaryrd}
\usepackage{yhmath}
\usepackage{mathtools}
\usepackage{extarrows}
\usepackage{enumerate}
\usepackage{lineno}
\usepackage{xcolor}

\usepackage{float}
\floatstyle{boxed}
\restylefloat{figure}
\usepackage{caption}
\usepackage{subcaption}

\usepackage{catex}
\usepackage{tikz}
\usetikzlibrary{cd, decorations.markings}

\usepackage{tikz-cd}

\tikzset{
Rightarrow/.style={double equal sign distance,>={Implies},->},
duble/.style={draw, -{Classical TikZ Rightarrow[length=1mm]}, double distance=1pt},
triple/.style={-,preaction={draw,Rightarrow}},
quadruple/.style={preaction={draw,Rightarrow,shorten >=0pt},shorten >=1pt,-,double,double distance=0.2pt},
labelarrow/.style={postaction=decorate,decoration={markings,mark=at position 0.5 with {\arrow{<};}}}
}

\newtheorem{proposition}{Proposition}
\newtheorem{theorem}{Theorem}

\newtheorem{lemma}{Lemma}

\theoremstyle{definition}
\newtheorem{definition}{Definition}

\theoremstyle{remark}
\newtheorem{remark}{Remark}

\begin{document}

\deftwocell[dots]{n : 2 -> 2} 
\deftwocell[text = f]{f : 2 -> 2}
\deftwocell[text = g]{g : 2 -> 2}
\deftwocell[text = h]{h : 2 -> 2}
\deftwocell[text = k]{k : 2 -> 2}
\deftwocell[rectangle]{ex1 : 1 -> 1}
\deftwocell[rectangle]{ex2 : 2 -> 2}
\deftwocell[rectangle]{ex3 : 3 -> 3}
\deftwocell[rectangle]{gen : 2 -> 2}
\deftwocell[left = \overline{x}]{xl : 2 -> 2}
\deftwocell[left = \overline{y}]{yl : 2 -> 2}
\deftwocell[text = \mathtt{N}]{not : 1 -> 1}
\deftwocell[text = \mathtt{T}_{2}]{t2 : 2 -> 2}
\deftwocell[text = \mathtt{T}_{3}]{t3 : 3 -> 3}
\deftwocell[crossing]{sw : 2 -> 2}
\deftwocell[crossing1]{lad1 : 3 -> 3}
\deftwocell[crossing2]{lad1' : 3 -> 3}
\deftwocell[crossing1]{lad2 : 4 -> 4}
\deftwocell[crossing2]{lad2' : 4 -> 4}
\deftwocell[mid = x]{wx : 1 -> 1}
\deftwocell[mid = y]{wy : 1 -> 1}
\deftwocell[mid = c]{wc : 1 -> 1}
\deftwocell[mid = 1-x]{oN : 1 -> 1}
\deftwocell[right = \:\:\:\:\:\:c + x]{oT2 : 1 -> 1}
\deftwocell[right = \:\:\:\:\:\:c + xy]{oT3 : 1 -> 1}
\deftwocell{w : 1 -> 1} 

\title{Termination of Rewriting on Reversible Boolean Circuits as a Free $3$-Category Problem}
\author{Adriano Barile}[orcid=0000-0002-5122-0340, email=adriano.barile@unito.it]
\author{Stefano Berardi}[orcid=0000-0001-5427-0020, email=stefano.berardi@unito.it]
\author{Luca Roversi}[orcid=0000-0002-1871-6109, email=luca.roversi@unito.it]
\address{University of Turin, Italy}

\copyrightyear{2023}
\copyrightclause{Copyright for this paper by its authors.
	Use permitted under Creative Commons License Attribution 4.0
	International (CC BY 4.0).}

\conference{Proceedings of the 24th Italian Conference on Theoretical Computer Science, Palermo, Italy, September 13-15, 2023}

\begin{abstract}
Reversible Boolean Circuits are an interesting computational model under many aspects and in different fields,
ranging from Reversible Computing to Quantum Computing.
Our contribute is to describe a specific class of Reversible Boolean Circuits - which is as expressive as classical circuits -
as a bi-dimensional diagrammatic programming language.
We uniformly represent the Reversible Boolean Circuits we focus on as a free $3$-category $\ToffoliCategory$. This formalism allows us to incorporate the representation of circuits and of rewriting rules on them, and to prove termination of rewriting. Termination follows
from defining a non-identities-preserving functor from our free $3$-category $\ToffoliCategory$ into a suitable $3$-category $\Move$ that traces the
``moves" applied to wires inside circuits.
\end{abstract}

\begin{keywords}
	Reversible Boolean Circuits \sep
	Reversible Computing \sep
	n-Categories \sep
	Polygraphs \sep
	Rewriting
\end{keywords}

\maketitle

\section{Introduction}
\label{section-introduction}

The class of Reversible Boolean circuits (from now on, \textsf{RBC}) constitutes an interesting computational model, for many reasons. We name just some of them: once implemented, they may help to reduce electronic devices energy consumption \cite{landauer1961ibm}, easing miniaturization, due to a limited heat dissipation; they are at the core of cryptographic block cyphers analysis \cite{taborsky2018}, and of quantum circuits synthesis \cite{barenco1995,saeedimarkov2013}. Moreover, \emph{reversibility} means that if we execute the circuits in the opposite direction, e.g. bottom-up instead of top-down, we are able to recover the input. \textsf{RBC} can nevertheless simulate all non-reversible classical circuits \cite{toffoli80lncs}.

\begin{figure}[h]
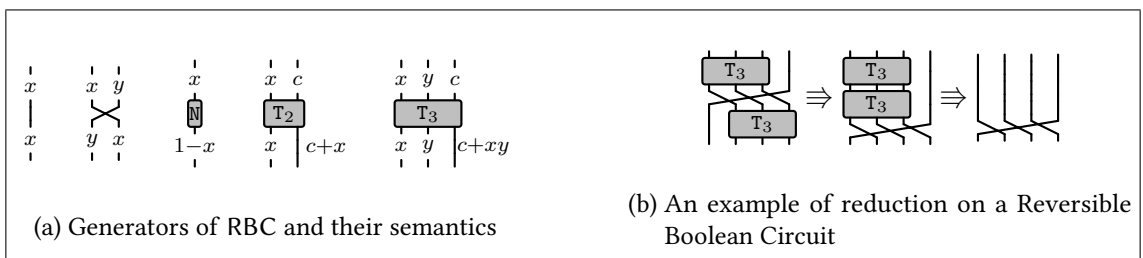

    \centering
    \begin{subfigure}[h]{0.45\textwidth}
        \centering
        \[
        \twocell{wx *1 w *1 wx}
        \quad
        \twocell{(wx *0 wy) *1 sw *1 (wy *0 wx)}
        \quad
        \twocell{wx *1 not *1 oN}
        \quad
        \twocell{(wx *0 wc) *1 t2 *1 (wx *0 oT2)}
        \quad
        \twocell{(wx *0 wy *0 wc) *1 t3 *1 (wx *0 wy *0 oT3)}
        \]
        \caption{Generators of \textsf{RBC} and their semantics}
        \label{fig:Generators and their semantics}
    \end{subfigure}
    \hfill
    \begin{subfigure}[h]{0.45\textwidth}
        \centering
        \[
        \twocell{(t3 *0 1)
                 *1
                 lad2
                 *1
                 (1 *0 t3)}
        \Rrightarrow
        \twocell{(t3 *0 1)
                 *1
                 (t3 *0 1)
                 *1
                 lad2}
        \Rrightarrow
        \twocell{ (w *0 w *0 w *0 w)
                  *1
                  (w *0 w *0 w *0 w)
                  *1
                  (w *0 w *0 w *0 w)
                  *1
                  lad2}
         \]
        \caption{An example of reduction on a Reversible Boolean Circuit}
        \label{fig:RevCircEx}
    \end{subfigure}
    \caption{Generators of Reversible Boolean Circuits and a reduction on them}
    \label{fig:Rev circ examples}
\end{figure}

\paragraph{Our focus} is to study a class of \textsf{RBC} in the lines of \cite{lafont2003towards}, i.e. as a bi-dimensional diagrammatic formal language built by series and parallel composition of the generators; the diagrams can be rewritten by rewriting rules preserving generator interpretation, as given in Fig.~\ref{fig:Generators and their semantics}. The inputs of the generators are on top. We assume $x, y, c \in \{0, 1\}$, and from left to right we have the gates: ``\emph{Identity}'', ``\emph{Swap}'', ``\emph{Negation}''
 (which we also call ``\emph{Toffoli-one}''), ``\emph{Toffoli-two}'', and ``\emph{Toffoli-three}''. We denote the boolean XOR operation with $+$; the AND is by juxtaposition.
Toffoli-two performs a \emph{controlled negation} ($\mathtt{CNOT}$): when the \emph{controller} $c$ is set to $1$, $x$ is negated, and is unchanged otherwise. The input $x$ is always carried over as an output.
Toffoli-three can be used to build conjuctions: if $c=0$, $(x,y,0) \mapsto (x,y,xy)$.

All classical boolean functions can be expressed in \textsf{RBC}, provided we add extra inputs/outputs in order to make the boolean functions invertible: in a reversible setting, the number of ouputs must be always equal to that of inputs.

All gates depicted in Fig.~\ref{fig:Generators and their semantics} represent invertible functions; moreover, they are self-inverse, i.e., if they yield an identity when applied twice.


\paragraph{Our goal} 
in the long term is  to answer specific domain questions related to the reversible language purposes.
Such questions in the short term require to answer more classical questions about rewriting.

An example of specific reversibility question is: ``Can we partition \textsf{RBC} into equivalence classes to find those containing the most efficient circuits, according to measures which depend on ancillae or the number of generators involved?''. We recall that ``ancillae'' is a technical notion, to identify variables used as temporary storage in a circuit $C\in $ \textsf{RBC}. Ancillae allow $C$ to compute the desired function while preserving the possibility of reverting the computation. A good reference to frame the role of ancillae in various contexts can be \cite{perumalla2013book}.

Another specific reversibility question can be: ``How can we decide the equivalence of reversible circuits that we obtain by compilers which translate classical boolean circuits into reversible ones, driven by some heuristic aimed at optimizing specific parameters of the output?''. The heuristic in \cite{Roetteler19Pebble} is just an example, of the many proposed in the literature, whose purpose, for instance, is to obtain good translations minimizing the quantity of ancillae.

Answering questions like the two above requires to formally and unambiguously know the representatives of equivalence classes in \textsf{RBC}. This amounts to asking basic questions about rewriting, such as: ``What is and how do we get normal forms of \textsf{RBC}?''

\paragraph{Contributions.}
A natural strategy to answer the last question is to look for normal forms w.r.t. an
interpretation-preserving rewriting on the circuits of \textsf{RBC}. Fig.~\ref{fig:RevCircEx} suggests what we mean by a simple example, in which we move a Toffoli-three gate next to another Toffoli-three so that they annihilate each other, because Toffoli-three are self-inverse operators. All \emph{base} reversible gates are self-inverse, while, in general, reversible circuits do not need to be self-inverse.

Rewriting bi-dimensional diagrammatic formal languages has well-know difficulties, mainly concerning their relations with the usual (linear) syntaxes expressing the same problem.

We here explore the use of a free $3$-category to represent both the reversible circuits and the rewriting rules on them. We build over Burroni's work \cite{burroni1993higher}, who introduced the notion of polygraphs to express algebraic theories and their reductions. Lafont \cite{lafont2003towards} used bidimensional diagrammatic syntax on product categories and an informal measure on them to prove termination results on several classes of boolean functions; Guiraud used $3$-polygraphs and diagrammatic reasoning to represent rewriting of circuit-like objects in \cite{guiraud2009polygraphic,guiraud2009finite}, together with a formal method to build terminating measures.
We focus on freely generated $3$-categories instead of polygraphs, to have access to the rewriting paths as $3$-morphisms; we call ``\emph{Toffoli $3$-category}'' the free $3$-category we introduce, and we denote it with $\ToffoliCategory$.

We remind the reader that an \emph{Abstract Rewriting System} (ARS) is said to be \emph{terminating} when there are no infinite chains of subsequent reductions, i.e. all reductions eventually yield a (not necessarily unique) \emph{normal form}.
We show termination of the rewriting system in $\ToffoliCategory$: termination follows from defining a functor from $\ToffoliCategory$ into a suitable $3$-category $\Move$ of moves applied to wires inside circuits.

We remark that the functor we introduce simplifies the functorial and differential interpretations in \cite{GUIRAUD2006341,guiraud2009polygraphic,guiraud2009finite}: our approach does not require to identify both a non-increasing measure on diagrams, which recalls a current flowing,
and a strictly decreasing measure, connected to the ``current'', which recalls heat, generated by the ``current'' itself. We just need to identify a decreasing measure on a monoid of strings. The measure keeps track of the ``moves'' being applied to each individual wire, and assigns them a cost.

\paragraph{Plan of our work.}
We sketch the definition of $3$-categories (\S \ref{section-categories}),
then we provide a bi-dimensional graphic representation for $3$-categories representing circuits (\S \ref{section-bidimensional}).
Next, we formally define the free $3$-category $\ToffoliCategory$ (\S \ref{section-circuit}),
and a functor from $\ToffoliCategory$ to a $3$-category $\Move$ of movements through circuits (\S \ref{section-functor}),which we show to be a decreasing measure of reductions.
Eventually we prove termination of reductions on \textsf{RBC} (\S \ref{section-termination}).
The reduction rules we utilized are \emph{syntactic} in nature; the normal form under algebraic equivalences remains an open problem \cite{lafont2003towards}.\\

\section{$3$-Categories, $3$-Functors and Free $3$-Categories}
\label{section-categories}
In this section we will briefly go over the definition of $n$-categories. We provide enough details for the purposes of representing circuits and syntactic reductions on them (level $3$).
For a complete description of higher categories we refer to e.g. \cite{baez1997,cheng2004}.
\paragraph{Motivations for using categories.} The literature using product categories to describe circuits is extensive; we follow Burroni and Guiraud and use higher category theory to describe reduction systems on algebraic structures, with particular attention to circuits \cite{burroni1993higher,guiraud2009finite,lafont2003towards}. The ``multi-level'' structure of $n$-categories provides a suitable model for bi-dimensional objects such as circuits; the third level effortlessly captures the notion of a rewriting system on circuits. Moreover, the use of categories allows us to reason up to ``bureaucratic'' identities such as those invoked when shortening/lengthening the wires by adding/removing identities.

We give an anticipation on our $3$-categorial model.
\begin{itemize}
	\item The ``base'' level will contain a single token object $\ast$ representing the empty space between wires in a circuit.
	\item A first level will contain input and output wires, with a ``monoidal'' product to generate bundles of multiple wires. In our model, we will consider a single \emph{wire} $\twocell{1}$ as a formal cell $\ast \rightarrow \ast$ between two ``empty spaces'' $\ast$.
	\item A second level will contain circuits as morphisms between $n$ input wires and $m$ output wires (in our context $n=m$ because of reversibility), with appropriate series and parallel compositions.
	\item	A third level will contain syntactic rewriting rules, considered as morphisms between circuits. Reductions should preserve the number of input/output wires of a circuit. 
\end{itemize}
  
We have that each level consists of morphisms, named \emph{cells}, with domain and codomain (here \emph{source} and \emph{target}) objects at the adjacent lower level. Source/target of a reduction is a circuit, source/target of a circuit is a set of I/O wires, and source/target of a wire is the empty space $\ast$.
The \emph{$i$-compositions} will provide an unified entity that captures both concatenation of wires, series and parallel composition of circuits, as well as three different ways to compose reductions. Also, the properties of $0$-, $1$-, $2$-composition will capture equations between circuits and between reductions.
The levels up to the second one yield a description of circuits isomorphic to the product categories formalization.  

We now provide an equational definition for $n$-categories following \cite{street1987}.

\begin{definition}[$n$-categories]
A (strict) $n$-category $\setC$ contains the following.
\begin{itemize}
	\item  A list of sets $C_i$, $0 \le i \le n$, called \emph{levels}, whose elements are called \emph{$i$-cells}.
	\item The maps $\source_i$, called \emph{$i$-source}, and $\target_i$, called \emph{$i$-target}, that associate to each $j$-cell $x$ with $0 \le i<j \le n$ two $i$-cells $\source_i(x)$, $\target_i(x)$ we respectively call the $i$-source and the $i$-target of $x$;
	\item The $j$-cells $x\star_i y$, defined for all $j$-cells $x$, $y$ and indices $0 \le i<j \le n$ such that $\target_i(x) = \source_i(y)$, called the \emph{$i$-composition} of $x$ and $y$, with $k$-source/target given by:
	\begin{enumerate}
		\item $\source_k(x\star_i y)=\source_k(x)$ and $\target_k(x\star_i y)=\target_k(y)$ when $0\leq k\leq i$;
		\item
		$\source_k(x \star_i y) =
		\source_k(x) \star_i \source_k(y)$ and $\target_k(x \star_i y) =
		\target_k(x) \star_i \target_k(y)$ when $j > k>i$.
	\end{enumerate}
	The $i$-composition is denoted in diagrammatic order (left-to-right).
\end{itemize}
The data above define an $n$-category if they satisfy the following further conditions.
\begin{itemize}
	\item
	\textbf{Globularity.}
	\[
	\source_{i-2} (\source_{i-1}(x))  = \source_{i-2} (\target_{i-1}(x)),\qquad \target_{i-2}(\source_{i-1}(x))  = \target_{i-2} (\target_{i-1}(x))
	\]
	for all $i$-cells $x$ with $2\leq i\leq n$. Globularity means that all $i$-cells connect two $(i-1)$-cells with the same $(i-2)$-source and $(i-2)$-target.
	\item
	\textbf{Associativity} of each $\star_{i}$.
	\item
	\textbf{Local Units.} For all $i$-cells $A$ with $0 \le i<j \le n$, there exists an identity $j$-cell $\id_{i,j,A}$, or $\id_A$ for short, 
such that $\source_i(\id_{i,j,A})=\target_i(\id_{i,j,A})=A$, the lower index source/targets of $\id_{i,j,A}$are those of $A$,
and for all $j$-cells $f$ we have
\begin{itemize}
	\item if $A = \source_i(f)$ then $\id_{i,j,A} \star_i f = f$, and
	\item if $A = \target_i(f)$ then $ f \star_i \id_{i,j,A} = f$.
\end{itemize}
Any $i$-composition of $j$-identity is a $j$-identity, for all $0 \le i<j \le n$. 

\item
\textbf{Exchange Rule.}
\begin{align*}
	\label{exchangerule}
	(\alpha \star_j \beta) \star_i (\gamma \star_j \delta) = (\alpha \star_i \gamma) \star_j (\beta \star_i \delta)
\end{align*}
for $\alpha,\beta,\gamma,\delta$ $k$-cells such that the above compositions are defined
and all $0 \le i<j <k\le n$.
\end{itemize}
\end{definition}

In order to build the measure that proves termination of circuits reductions, we need the notion of $3$-functor which we define for the general case.

\begin{definition}[$n$-functor]
Let $\setC$, $\setD$ be $n$-categories. An $n$-functor $\varphi: \setC \rightarrow \setD$
is a map such that for all $0 \le i  \le n$, $\varphi$ sends $i$-cells of $\setC$ into $i$-cells of $\setD$,
and such that for all $i$-cells $f,g$ of $\setC$, for all $0 \le j < i\le n$, and for all $j$-cells $A$ we have:
\begin{enumerate}
\item
\textbf{Source/Target preservation}.
$\source_j(F(f)) = F(\source_j(f))$, $\target_j(F(f)) = F(\target_j(f))$
\item
\textbf{Identity preservation (for $i$-cells and $\star_j$)}.
$F(\id_{j,i,A}) = \id_{j,i,F(A)}$
\item
\textbf{Composition preservation}.
 $F(f \star_j g) = F(f) \star_j F(g)$
\end{enumerate}
\end{definition}


\subsection*{Free $n$-Categories}
In the next section, we will represent circuits and reductions on them with a free $3$-category. A  free $n$-category consists of all well-formed expressions for objects in an $n$-category that are generated by a set of names for cells, identities and compositions, modulo all the equations we have for $n$-categories.

\begin{definition}[Free $n$-Categories]
Let $\vec{G}_{n}$ be a \emph{signature}, i.e. a list of sets $\vec{G}_n=(G_0, \ldots, G_{n})$, with $G_i$ sets of names for $i$-cells for $1 \le i \le n$, equipped with two maps $\source, \target:G_{i}\rightarrow G_{i-1}$, $1\leq i \leq n$ such that the globularity requirement is met. We define the \emph{free $n$-category $\setC_n$ generated by the signature $\vec{G}_n$} by induction on $n$.
\\

The $0$-category $\setC_0$ is just the set $G_0$. Assume we have defined the free $(n-1)$-category $\setC_{n-1}$ generated by the signature $\vec{G}_{n-1}$ as the $(n-1)$-category with levels $C_0,C_1,\dots,C_{n-1}$. 

\begin{itemize}
\item
An \emph{$n$-generator} of $C_n$ 
is any name $f \in \setC_{n-1}$ such that $n=1$ or $n \ge 2$ and $f$ satisfies the Globularity condition: 
$\source_{n-2}(\source_{n-1}(f)) = \source_{n-2}(\target_{n-1}(f))$ and $\target_{n-2}(\source_{n-1}(f)) = \target_{n-2}(\target_{n-1}(f))$. 
\item
Let $E_n$ be the set of \emph{$n$-constants} of $C_n$, containing all $n$-generators and all expressions $\id_{i,n,A}$ denoting the $i$-identity on $A$, for $0 \le i \le n-1$, $A \in C_i$.
\item
Let $E_n^*$ be the smallest set containing $E_n$ and all expressions $f \star_i g$ such that $f,g \in E^*_n$, $0 \le i \le n-1$ and $\target_{i}(f) = \source_{i}(g)$. Source/target maps are defined on $E^*_n$ by the source/target equations for $n$-categories.
\end{itemize}

$\setC_{n}$ is defined as the $n$-category with levels $C_{0},\dots,C_{n-1},C_{n}$ and $C_{n}=E^*_{n}/\sim$, where $\sim$ is the smallest equivalence relation compatible with $\star_0, \ldots, \star_{n-1}$ and including Associativity, Local Units and Exchange.
\end{definition}

Alternatively, the readers who are familiar with the definition of polygraphs \cite{burroni1993higher,guiraud2009polygraphic,GUIRAUD2006341} will notice that the free $n$-category generated by the signature $G$ is isomorphic to the $n$-category generated by the corresponding $n$-polygraph, which itself lacks categorial structure at the $n$-th level \cite{metayer2008}.


\section{A Bi-dimensional Diagrammatic Syntax for $3$-Categories}
\label{section-bidimensional}

In this section we restrict to free $3$-categories with a single $0$-cell $\ast$ and a unique generator for $1$-cell, i.e. the wire $\twocell{1}$, and we describe them as formal circuits and circuit reductions, by specifying what role the $i$-compositions take in the context of circuits.
We introduce a diagrammatic syntax that represents such free $3$-categories and the associated circuits. A great incentive to use a bi-dimensional syntax is that diagrams actually \emph{look like} circuits. We stress the fact that the categorial setting together with a diagrammatic formalism give a thorough, compact and sound presentation for circuital theories.
We invite the reader to think of $i$-compositions as a gluing operator of two objects along their common $i$-target and source, respectively.

\begin{definition}[Diagrams for a free $3$-category]
\label{def:Diagrams for a generic 3-category}
Let us assume that $G$ is a free $3$-category with generator sets  $\vec{G}=(G_{0},G_{1},G_{2})$, source and target maps $\source_i, \target_i$, a single $0$-cell and a unique generator for $1$-cells. The diagrammatic representation of $G$ is as follows.

\begin{itemize}
	\item \textbf{Generators.}
	\begin{itemize}
		\item $G_0 = \{\ast\}$ consists of a unique $0$-cell, representing a separator between input/output wires.
		We depict $\ast$ as a white area in the sheet of paper the diagram is drawn on.

		\item $G_1 = \left\{ \: \twocell{1} \: \right\}$ consists of a unique $1$-cell, which we call a \emph{wire}.
		A wire is a formal cell between the two portions of sheet that it divides, both marked with $\ast$.

		\item
		Each gate in $G_2$ is depicted as a circuit-like box with input and output wires representing $1$-source and $1$-target of the circuit.
		\[
		G_2 = \left\{
		\twocell{1 *1 ex1 *1 1} \:, \ldots,
		\twocell{2 *1 ex2  *1 2} \:, \ldots,
		\twocell{3 *1 ex3  *1 3} \:, \ldots,
		\twocell{n *1 gen *1 n} \:,\ldots \right\}.
		\]
		 If $g$ is a gate with $n$ input and output wires, we write $g:n\Rightarrow n$. The diagrams depicting gates preserve the meaning of $2$-cells as formal maps between $1$-cells (wires).
		\item
		$3$-cells in $G_3$, or reductions, are written as $\twocell{n *1 f *1 n} \Rrightarrow \twocell{n *1 g *1 n}$.
		We include no diagram for $3$-cells: this would involve $3$-dimensional objects \cite{guiraud2009polygraphic}.
	\end{itemize}
	\item \textbf{$i$-cells.}
	\begin{itemize}
		\item
		The set $G^*_0$ of all $0$-cells is again $G_0=\{\ast\}$.
		\item
		The set $G^*_1$ of all $1$-cells consists of all possible 	$0$-compositions $\star_0$ between $1$-cell generators.
		Any element of $G^*_1$ has the form
		\,$\twocell{1} \ast\ \twocell{1}\ \ast \twocell{1}\dots \ast \twocell{1}$ or simply
		\,$\twocell{1} \quad \twocell{1} \dots \twocell{1}$, freely generated as $0$-composition along their common $\ast$ white area.

		\item
		The set $G^*_2$ of all $2$-cells represents circuits, and it is obtained from gates in $G_2$ by closure w.r.t. $0$-composition $\star_0$ (parallel composition, or composition along a common $\ast$ area), and $1$-composition $\star_1$ (sequential composition, or composition along a common bundle of wires).

		\item \textbf{Compositions of $2$-cells.}
		\begin{itemize}
		\item Every two $2$-cells are $0$-composable, since they have as inputs and outputs $1$-cells with the same source and target
		(the only $0$-cell $\ast$). The $0$-composition of $2$-cells is depicted by putting the circuits next to each other along their common $\ast$ white area: $\twocell{(n *1 f *1 n) *0 (n *1 g *1 n)}$. This operation corresponds to the usual parallel composition and is read left-to-right.
		\item
		$1$-composition of $2$-cells is defined for pairs of $2$-cells such that $1$-target of the first one (its output wires) is equal to the $1$-source of the second one (its input wires). $1$-composition vertically stacks circuits by connecting common I/O set of wires: $\twocell{n *1 f *1 n *1 g *1 n}$. This operation corresponds to series composition and is read top-to-bottom.
	\end{itemize}

	\item \textbf{Circuit equivalences.}
\begin{itemize}

\item
The degenerate $1$-composition $\id_{A} \star_{1}f$ 
corresponds to prolonging the wires of $A = \source_{1}(f)$. The Unit rule $\twocell{n *1 (w *0 w) *1 2 *1 2 *1 f *1 n}=\twocell{n *1 2 *1 2 *1 f *1 n}=\twocell{n *1 2 *1 2 *1 f *1 (w *0 w) *1 n}$ graphically shows that circuits are defined independently from the length of wires. 
\item

The Exchange rule $ (f\star_{1} g)\star_{0}(h\star_{1}k)=	(f\star_{0} h)\star_{1}(g\star_{0}k)$ says that the diagram
\begin{align*}
    \twocell{((n *1 f) *1 (n *1 g *1 n)) *0 ((n *1 h) *1 (n *1 k *1 n))}
\end{align*}
defines a single circuit which we can be equivalently read left-to-right, or top-to-bottom.
\end{itemize}
\end{itemize}

\item \textbf{3-cells}, or circuit reductions, can be composed with $\star_{0},\star_{1},\star_{2}$.
If $\alpha:f\Rrightarrow g$ and $\beta:h\Rrightarrow k$, then the compositions behave as follows.
	\[
	\begin{array}{ccccc}
		\twocell{(n *1 f *1 n) *0 (n *1 h *1 n)} \Rrightarrow \twocell{(n *1 g *1 n) *0 (n *1 k *1 n)} & \qquad &
		\twocell{n *1 f *1 n *1 h *1 n} \Rrightarrow \twocell{n *1 g *1 n *1 k *1 n} & \qquad &
		\twocell{n *1 f *1 n}\Rrightarrow\twocell{n *1 k *1 n} \\
		 & & & & \\
		\alpha\star_{0}\beta & \qquad & \alpha\star_1\beta & \qquad & \alpha\star_2\beta \\
		 & \qquad & (f,h \text{ and } g,k\text{ $1$-composable)} & \qquad & (g=h)
	\end{array}
	\]
\end{itemize}
\end{definition}

\begin{remark}
The diagrammatic notation, inspired by the string diagrams of categories, can be treated as a full $2$-dimensional syntax for product categories, as explained in \cite{acclavio2016phd}. Again in \cite{acclavio2016phd} there is a termination result for the rewriting system where associativity, local units and the exchange rule are understood as rewriting rules and not as equivalences,
and the concepts of \emph{squeezed form} and \emph{longest normal form} are introduced.
In this paper, we shall limit ourselves to reasoning on diagrams modulo the above equivalences.
\end{remark}

\section{The Free $3$-Category of Reversible Boolean Circuits}
\label{section-circuit}
In this section we will describe, as a free $3$-category $\ToffoliCategory$, both all reversible circuits we can obtain from the generators $\mathtt{SWAP}$, $\mathtt{NOT}$, $\mathtt{CNOT}$ (or $\mathtt{T}_{2}$, for Toffoli two), and $\mathtt{T}_{3}$ (or $\mathtt{CCNOT}$), and a particular set of reductions on such circuits. This set of gates is proven to be universal (with ancillae) in \cite{toffoli80lncs}.
The $3$-category $\ToffoliCategory$ includes, as its $3$-generators, a reduction set composed of rules that replace each pair of consecutive $\mathtt{SWAP}$ or consecutive Toffoli gates by an identity, implementing involutivity of reversible gates. We also arrange circuits in a canonical form, with a ``leaning'' to moving a $\mathtt{SWAP}$ down, and when this is not possible, to the right.
All reductions preserve the associated boolean function.
\begin{definition}[The free $3$-category $\ToffoliCategory$ of Reversible Circuits]
	The free $3$-category of Reversible Circuits is the free $3$-category specified by the following sets of $0$-, $1$-, $2$- and $3$-generators.
	\begin{itemize}
		\item
$R_{0}=\{\ast\}$,
$R_{1}=\{\:\twocell{1}\:\}$, where $\twocell{1}:\ast\rightarrow\ast$

\item
$R_{2}$ contains the following generators for reversible circuits, in this order: $\mathtt{SWAP}$, $\mathtt{NOT}$, $\mathtt{T}_{2}$, $\mathtt{T}_{3}$
\begin{align*}
R_{2}=
\left\{\:
\twocell{sw}:2\Rightarrow 2,\:
\twocell{not}:1\Rightarrow 1,\;
\twocell{t2}:2\Rightarrow 2,\:
\twocell{t3}:3\Rightarrow 3\:
\right\}
\end{align*}

		\item
$R_{3}=R_{p}\cup R_{a}\cup R_{s} \cup R_{t}$, where $R_{p}$ contains the following \emph{permutation rules}:
		\begin{align*}
			R_{p}=\left\{\:\twocell{sw *1 sw}\Rrightarrow \twocell{2}\:,\:\twocell{(sw *0 1) *1 (1 *0 sw) *1 (sw *0 1)}\Rrightarrow\twocell{(1 *0 sw) *1 (sw *0 1) *1 (1 *0 sw)} \:\right\},
		\end{align*}
		$R_{a}$ contains the following \emph{annihilation rules}:
		\begin{align*}
			R_{a}=\left\{\:\twocell{not *1 not}\Rrightarrow\twocell{1}\:,\:\twocell{t2 *1 t2}\Rrightarrow\twocell{2}\:,\:\twocell{t3 *1 t3}\Rrightarrow\twocell{3}\:\right\}
		\end{align*}

		$R_{s}$ contains the following \emph{sliding rules}:

		\[
		R_{s} = \left\{\:
		\begin{array}{ccc}
			\:\twocell{sw *1 (not *0 1)}\Rrightarrow\twocell{(1 *0 not) *1 sw}
\quad &,&\quad
\twocell{sw *1 (1 *0 not)}\Rrightarrow\twocell{(not *0 1) *1 sw}
\\
			\\
\twocell{lad1' *1 (t2 *0 1)}\Rrightarrow\twocell{(1 *0 t2) *1 lad1'}
\quad &,&\quad
			\twocell{lad1 *1 (1 *0 t2)}\Rrightarrow\twocell{(t2 *0 1) *1 lad1}
\\
			\\
\twocell{lad2' *1 (t3 *0 1)}\Rrightarrow\twocell{(1 *0 t3) *1 lad2'}

\quad &,&\quad
			\twocell{lad2 *1 (1 *0 t3)}\Rrightarrow\twocell{(t3 *0 1) *1 lad2}
		\end{array}
		\:\right\}
		\]

      and $R_{t}$ contains the following \emph{Swapped Toffoli rule}:
\begin{align*}
	R_{t}=\left\{ \:\twocell{(sw *0 1) *1 t3} \Rrightarrow \twocell{t3 *1 (sw *0 1)}\:\right\}
\end{align*}
	\end{itemize}
\end{definition}


\section{The Interpreting $3$-Functor}
\label{section-functor}
In this section, we will define a $3$-category $\Move$ of strings of ``moves'', each of them corresponding to the action of a gate on a single wire. Moves are elements of an ordered monoid that expresses the cost of series of singular moves along the wires of the circuit.
The circuits are then measured by a componentwise order on all moves on all wires. 
The reduced circuit should describe an equivalent boolean function with a reduced total cost. The interpretation of circuits into moves will be given as a $3$-functor from the free $3$-category $\ToffoliCategory$ of reversible circuits and a Toffoli base to $\Move$.

\begin{definition}[The ordered monoid $(\Monoid, <_{\Monoid})$ of moves]

\begin{enumerate}
\item
$\Monoid$ is the free monoid of words generated from the letters $\leftM, \rightM, \ToffoliM$.

\item
We order $w_1, w_2 \in \Monoid$ first by length, and when the lengths are the same, 
by the lexicographic order induced by the following order on letters: $\ToffoliM <_{\Monoid}\rightM <_{\Monoid} \leftM$.

\item
We write $<_{\Monoid}$ for  the order on $\Monoid$. 

\end{enumerate}
\end{definition}

The letters $\leftM, \rightM, \ToffoliM$ correspond to the three moves ``left-to-right", ``right-to-left" on the two wires of a $\mathtt{SWAP}$, and to the move ``Toffoli" on any wire of a Toffoli circuit. 
From $\Monoid$ we define a $3$-category $\Move$.

\begin{definition}[$\Move$]
$\Move$ is a collection of the following sets and ordered sets of $i$-cells $M_i$, $0\leq i\leq 3$.
\begin{itemize}
	\item $M_{0}=\{\ast\}$, a single element set.
	
	\item $(M_{1},<_1)$ is the set of all cartesian powers $\Monoid^n$ for $n \in \N$; the singleton $\Monoid^0$ is the identical $1$-cell, identified with the unique $0$-cell $\ast$.
	We define an order on $1$-cells as $\Monoid^n <_1 \Monoid^m$ if and only if $n<m$.
	We order vectors $\vec{w} \in \Monoid^n$ componentwise, with the \emph{product order} $<_{\Monoid^n}$ on $\Monoid^n$.
	
	\item $(M_2,<_2)$ is the set of all $<_{1}$-increasing maps $f:\Monoid^n \rightarrow \Monoid^n$ for some $n \in \N$ (all the identities $\id_{\Monoid^{n}}$ are included).
	Let $f,g:\Monoid^n \rightarrow \Monoid^n$ be $2$-cells with the same source and target.
	We define a \emph{strictly pointwise} order $f<_2 g$ on $2$-cells by: $f<_2g$ if and only if $f(x)<_{\Monoid^n} g(x)$ for all $x \in \Monoid^n$. We write $f \le_2 g$ for $f=g \vee f<_2 g$, that is:	either $f(\vec{x})=g(\vec{x})$ for all $\vec{x}$, or $f(\vec{x}) <_{\Monoid^n} g(\vec{x})$ for all $\vec{x} \in \Monoid^n$.
	
	\item $M_3$ contains all pairs  $r = \langle  f,g \rangle$ of $2$-cells with the same source and target $\Monoid^n$ for some $n$, such that $f \ge_2 g$. There is at most one $3$-cell between $f,g$, which merely signals the existence of a $\leq_2$-relation between the two functions. We think of each pair $\langle  f,g \rangle$ as a reduction from $f$ to $g$. The reduction is identical	if $f = g$, it is non-identical if $f <_2 g$.
\end{itemize}
The $i$-compositions on the $i$-cells are defined as follows.
\begin{itemize}
	\item \textbf{$0$-composition on $1$-cells} is $\Monoid^n\star_0 \Monoid^m=\Monoid^{n+m}$.
	\item \textbf{$0$-composition on $2$-cells} is the cartesian product of maps.
	\item \textbf{$1$-composition on $2$-cells} is the usual (sequential) composition of maps.
	\item \textbf{$0$-composition on $3$-cells} is defined as $\langle  f,g \rangle \star_0 \langle  f',g' \rangle = \langle  f \times f', g \times g' \rangle$.
	\item \textbf{$1$-composition on $3$-cells} is defined as $\langle  f,g \rangle \star_1 \langle  f',g' \rangle = \langle  f'f, g'g \rangle$.
	\item \textbf{$2$-composition on $3$-cells} is defined as  $\langle  f,g \rangle \star_2 \langle  g,h \rangle = \langle f,h\rangle$.
\end{itemize}
\end{definition}

\begin{remark}
	The order $<_{\Monoid}$ on words is well-founded, and in fact $(\Monoid,<_{\Monoid})$ is order-isomorphic to the order $(\N,<)$ on natural numbers.
	The order $<_2$ on maps is well-founded. In fact, each decreasing sequence $f >_2 f' >_2 f'' >_2 \ldots$ defines a decreasing sequence $f(\vec{\epsilon}) >_{\Monoid^n} f'(\vec{\epsilon}) >_{\Monoid^n} f"(\vec{\epsilon}) >_{\Monoid^n} \ldots$,
	where $\vec{\epsilon}=(\epsilon,\epsilon,\dots,\epsilon)$ and $\epsilon$ is the monoid unit (empty word).
	Therefore, every decresing sequence
	from $f$ has length at most the number of words which are less than $f(\vec{\epsilon})$ in $\Monoid^n$.
	
\end{remark}


\begin{proposition}
	$\Move$ is a $3$-category.
\end{proposition}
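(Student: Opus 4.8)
The plan is to verify, one by one, each defining condition of a $3$-category from the first definition, checking that the sets $M_0, M_1, M_2, M_3$ together with the given $i$-compositions satisfy them. The strategy is purely verificational: most conditions reduce to elementary facts about functions, cartesian products, and the product order, but there are a few genuine well-definedness obligations that must not be skipped. First I would fix, for each $i$-composition, the obvious source and target maps: on $1$-cells, $\source_0=\target_0=\ast$; on $2$-cells $f:\Monoid^n\to\Monoid^n$, we set $\source_1(f)=\target_1(f)=\Monoid^n$; and on a $3$-cell $\langle f,g\rangle$ we set $\source_2\langle f,g\rangle=f$, $\target_2\langle f,g\rangle=g$, with the lower-level source/targets inherited. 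Then I would check the source/target equations (conditions 1 and 2 of the composition clause) for each composition, which are routine.

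The heart of the proof is a sequence of \emph{well-definedness} checks, which I expect to be the main obstacle, since the other axioms (globularity, associativity, units, exchange) follow once well-definedness is in hand. There are three things to verify. (i) $0$-composition and $1$-composition of $2$-cells must stay inside $M_2$, i.e. the cartesian product $f\times f'$ and the sequential composite $g'\circ g$ of $<_1$-increasing maps must again be $<_1$-increasing; this is immediate because composites and products of order-preserving maps preserve the product order $<_{\Monoid^n}$. (ii) Every $3$-cell composite must again lie in $M_3$, i.e. the resulting pair $\langle\cdot,\cdot\rangle$ must satisfy the $\ge_2$ relation between its two components. For $\star_0$ I must show that $f\ge_2 g$ and $f'\ge_2 g'$ imply $f\times f'\ge_2 g\times g'$; for $\star_1$ that they imply $f'\circ f\ge_2 g'\circ g$; for $\star_2$ that $f\ge_2 g$ and $g\ge_2 h$ imply $f\ge_2 h$. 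The last is just transitivity of $\ge_2$. The $\star_1$ case is the delicate one: from $f\ge_2 g$ and $f'\ge_2 g'$ I need $f'(f(\vec x))\ge_{\Monoid^n} g'(g(\vec x))$ for all $\vec x$, and here I must use both the hypothesis $f'\ge_2 g'$ \emph{and} the fact that $g'$ is $<_1$-increasing (monotone), chaining $f'(f(\vec x))\ge_{\Monoid^n} g'(f(\vec x))\ge_{\Monoid^n} g'(g(\vec x))$; this is exactly why $M_2$ was defined to contain only increasing maps. (iii) I would also note that the $2$-composition $\star_2$ is only defined when the target of the first $3$-cell equals the source of the second, matching the category axiom's composability precondition.

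Next I would dispatch the remaining axioms. \textbf{Globularity} holds trivially because every $2$-cell has equal source and target $\Monoid^n$, and every $3$-cell $\langle f,g\rangle$ has $f,g$ sharing the same source/target $\Monoid^n$, so the two $(i-2)$-level conditions coincide. \textbf{Associativity} of each $\star_i$ follows from associativity of addition on $\N$ (for $\star_0$ on $1$-cells), of cartesian product and of function composition (for $\star_0,\star_1$ on $2$-cells), and componentwise from these at the $3$-cell level; $\star_2$ associativity is the associativity of composing pairs along a shared middle component. \textbf{Local Units} are witnessed by $\Monoid^0$ at level $1$, the identity maps $\id_{\Monoid^n}$ at level $2$, and the identical reductions $\langle f,f\rangle$ at level $3$, and the unit laws are immediate from the corresponding facts for empty words, identity functions, and reflexivity. \textbf{Exchange} is the only algebraic identity requiring a short computation: I would verify $(\alpha\star_2\beta)\star_1(\gamma\star_2\delta)=(\alpha\star_1\gamma)\star_2(\beta\star_1\delta)$ and its $\star_0/\star_1$ analogue by unfolding both sides on pairs and reducing to the exchange laws already known for function composition and cartesian product — that is, to the fact that $(g'\circ f')\times(g\circ f)=(g'\times g)\circ(f'\times f)$, which holds pointwise. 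Because every level below $3$ is an ordinary (monoidal) category of sets and functions, all these checks are mechanical; the one place demanding care is the monotonicity argument in obligation (ii) for $\star_1$, and I would present that step explicitly while treating the rest as routine.
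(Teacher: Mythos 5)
Your proposal is correct and follows essentially the same route as the paper: the decisive step in both is showing that $\star_0$ and $\star_1$ preserve the order on $2$-cells (hence that $3$-cell composites remain legitimate pairs), with your chain $f'(f(\vec x))\ge_{\Monoid^n} g'(f(\vec x))\ge_{\Monoid^n} g'(g(\vec x))$ being the mirror image of the paper's $g(f(x))\le_{\Monoid^n} g(f'(x))\le_{\Monoid^n} g'(f'(x))$, both exploiting exactly the pointwise hypothesis plus monotonicity of the maps in $M_2$. Like the paper, you then dismiss globularity, associativity, units and exchange as routine, so the two arguments coincide in substance; just take care that $\le_2$ means ``equal or strictly less at \emph{every} point,'' so the strictness bookkeeping (which your case-by-case chaining does supply) is part of the well-definedness claim, not an afterthought.
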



\begin{proof}
We first prove that $\star_0$, $\star_1$ are increasing on $2$-cells of $\Move$.
Assume $f \le_2 f'$ and $g\le_2 g'$, and either $f <_2 f'$ or $g <_2 g'$.
Then $\star_0$ is increasing: for all $x \in \Monoid^{n} = \source_1 f $, $y \in  \Monoid^{m} =\source_1 g$,
we have $(f \times g)(x,y)=(f(x),g(y)) <_{\Monoid^{n+m}} (f'(x),g'(y)) =  (f' \times g')(x,y)$.
 $\star_1$ is increasing: if $\target_1(f) = \Monoid^n = \source_1(g)$, then for all $x \in \source_1(f)$ we have
$(g f)(x)=g(f(x)) \le_{\Monoid^n} g(f'(x)) \le_{\Monoid^n} g'(f'(x)$,
and one of the two inequalities is strict, therefore  $g(f(x)) <_{\Monoid^n} g'(f'(x)$.
We then have also that $\star_0$ and $\star_1$ are increasing on $3$-cells of $\Move$. As a consequence, $i$-cells are closed w.r.t. $j$-compositions. Associativity, unit and exchange axioms are straightforward.
\end{proof}

\begin{definition}
A $3$-functor $\varphi: \ToffoliCategory \rightarrow \Move$
is \emph{strict in a $3$-cell $\alpha$} if
\begin{center}
 $\alpha$ is an identity $3$-cell in $\ToffoliCategory \Longleftrightarrow \varphi(\alpha)$ is an identity $3$-cell in $\Move$.
\end{center}
$\varphi$ is \emph{strict} if $\varphi$ is strict on all $3$-cells $\alpha$ of $\ToffoliCategory$.
\end{definition}

\begin{proposition}
Assume $\varphi: \ToffoliCategory \rightarrow \Move$ is a $3$-functor which is strict on all generators of $\alpha$.
Then $\varphi$ is strict.
\end{proposition}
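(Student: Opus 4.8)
The plan is to prove the two implications of strictness separately, dispatching the easy direction by pure functoriality and the hard one by structural induction on the $3$-cells of $\ToffoliCategory$. The guiding observation is that, by source/target preservation, $\varphi(\alpha)=\langle\,\varphi(\source_2(\alpha)),\varphi(\target_2(\alpha))\,\rangle$, so $\varphi(\alpha)$ is an identity $3$-cell in $\Move$ exactly when $\varphi(\source_2(\alpha))=\varphi(\target_2(\alpha))$. For the forward implication, if $\alpha$ is an identity $3$-cell of $\ToffoliCategory$, then it is a composite under $\star_0,\star_1,\star_2$ of identity constants $\id_{i,3,A}$; identity preservation and composition preservation send it to a composite of identity $3$-cells of $\Move$, which is again an identity. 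This direction uses no assumption on generators and holds for any $3$-functor.

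For the converse I argue by contraposition, and I first isolate the key monotonicity fact in $\Move$: any $\star_i$-composition of two $3$-cells is strict as soon as at least one factor is strict. For $\star_0$ and $\star_1$ this is exactly the content of the previous Proposition, where strict increase on $2$-cells was shown to lift to $3$-cells. For $\star_2$ it follows from transitivity of $>_2$: since $\langle f,g\rangle\star_2\langle g,h\rangle=\langle f,h\rangle$, having $f>_2 g$ or $g>_2 h$ forces $f>_2 h$. Dually, a $\star_i$-composition of two identity $3$-cells is again an identity, by the local-units axiom.

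Now suppose $\alpha$ is not an identity $3$-cell. Since $C_3=E_3^{*}/\sim$, the cell $\alpha$ has a representative expression built from $3$-generators and identity constants by $\star_0,\star_1,\star_2$, and I induct on this expression. If it is a single $3$-generator, then it is non-identical, because the source and target $2$-cells of each rewriting rule in $R_3$ are distinct; strictness on every generator then gives $\varphi(\alpha)$ strict. If the expression is an identity constant, then $\alpha$ is an identity $3$-cell, contrary to assumption. If it is $\beta\star_i\gamma$, then because a $\star_i$-composition of identities is an identity, the hypothesis that $\alpha$ is non-identical forces at least one factor, say $\beta$, to be non-identical; by the inductive hypothesis $\varphi(\beta)$ is strict, whence the monotonicity fact makes $\varphi(\alpha)=\varphi(\beta)\star_i\varphi(\gamma)$ strict as well. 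Thus $\varphi(\alpha)$ is non-identical, completing the induction and the proof.

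The main obstacle is the bookkeeping around the third composition and around identity factors: one must check that strictness propagates through every $\star_i$ even when the other factor is an identity $3$-cell, and, conversely, that only composites of identities remain identical. The first point is exactly what the previous Proposition supplies for $\star_0,\star_1$, completed by the transitivity argument for $\star_2$; the second is the local-units axiom of the free $3$-category, and it is precisely what licenses assigning a strict factor to any non-identical composite in the inductive step.
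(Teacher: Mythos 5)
Your proof is correct and follows essentially the same route as the paper's (very terse) argument: induction on a representative expression of $\alpha$, combining functoriality with the fact that $\star_0$, $\star_1$, $\star_2$ in $\Move$ send a strict factor to a strict composite, the first two cases being supplied by the proposition that $\Move$ is a $3$-category and the $\star_2$ case by transitivity of $>_2$. Your write-up merely makes explicit what the paper compresses into one line, including the handling of identity constants and the contrapositive framing.
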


\begin{proof}
By induction on $\alpha$, using the fact that $\varphi$ is a $3$-functor and $0$-, $1$-, $2$- and $3$-composition are increasing.
\end{proof}

We define a $3$-functor $\varphi: \ToffoliCategory \rightarrow \Move$ which is strict on all generators for $0$-, $1$- ,$2$-cells. Recall that a word is written left-to-right, the last letter being the last move.

\begin{definition}[Move interpretation]
\label{definition-functor}
	We define a map $\varphi$ by the following assignments on the generator gates of $\ToffoliCategory$.
	\begin{gather*}
          \varphi(\ast)=\ast \qquad
		\varphi(\:\twocell{1}\:)=\Monoid  \qquad \varphi(\:\twocell{sw}\:)(v,w)=(w \leftM, v \rightM) \\
		\varphi(\:\twocell{not}\:)(v)=v \ToffoliM \qquad \varphi(\:\twocell{t2}\:)(v, w)=(v \ToffoliM,w \ToffoliM)\\
		\varphi(\:\twocell{t3}\:)(v,w,z)=(v \ToffoliM, w \ToffoliM, z \ToffoliM) \qquad
          \varphi(r) = \langle \varphi(\source_2 r), \varphi(\target_2 r) \rangle
	\end{gather*}
\end{definition}
\begin{lemma}
	The map $\varphi$ of Def. \ref{definition-functor} extends in a unique way to a $3$-functor
	$\ToffoliCategory \rightarrow \Move$.
\end{lemma}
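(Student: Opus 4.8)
The plan is to invoke the universal property of the free $3$-category $\ToffoliCategory$. Its cells are, by construction, the well-formed expressions built from the generators by identities and the compositions $\star_0,\star_1,\star_2$, quotiented by exactly the equations (associativity, local units, exchange) that hold in every $3$-category. Hence, to obtain a $3$-functor it suffices to fix the images of the generators so that they (i) are genuine cells of the target, and (ii) are compatible with the $\source_i$ and $\target_i$ maps; the assignment then extends by the structural recursion $\varphi(\id_A)=\id_{\varphi(A)}$ and $\varphi(f\star_i g)=\varphi(f)\star_i\varphi(g)$. This recursion is well defined on the quotient precisely because $\Move$ is a $3$-category (by the Proposition above), so it already satisfies the equations generating $\sim$; and it is forced, whence uniqueness, since any $3$-functor must commute with identities and with every $\star_i$. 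So the only real work is to check, level by level, that the images prescribed in Def.~\ref{definition-functor} are well typed.

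Levels $0$ and $1$ are immediate: $\ast\mapsto\ast$, the unique wire goes to $\Monoid=\Monoid^{1}$, and a bundle of $n$ wires goes to $\Monoid^{n}$, since $\Monoid^{n}\star_0\Monoid^{m}=\Monoid^{n+m}$ matches the free monoid $\N$ of wire-bundles. At level $2$ I would first check that each gate generator is sent to a legitimate element of $M_2$, i.e. a map $\Monoid^{n}\to\Monoid^{n}$ that is increasing for the product order $<_{\Monoid^{n}}$ and has $\Monoid^{n}$ as both $1$-source and $1$-target. Every generator image either permutes its inputs or not, and then appends one fixed letter ($\leftM$, $\rightM$, or $\ToffoliM$) to each component; since right-concatenation by a fixed letter preserves the length-then-lexicographic order $<_{\Monoid}$, each such map is monotone. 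The extension of $\varphi$ to an arbitrary circuit is then forced and again lands in $M_2$, because the cartesian product ($\star_0$) and the sequential composition ($\star_1$) of monotone maps are monotone; source/target preservation is immediate from the definition of these operations in $\Move$.

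The crux is level $3$. Each $3$-generator is a rule $r\colon f\Rrightarrow g$ between circuits of equal arity, sent to the pair $\langle\varphi(f),\varphi(g)\rangle$. Since $M_3$ is \emph{thin} --- it has a $3$-cell from $\varphi(f)$ to $\varphi(g)$ exactly when $\varphi(f)\ge_2\varphi(g)$ --- the value of $\varphi$ on a $3$-cell carries no extra choice, and the sole thing to establish is that $\varphi(f)\ge_2\varphi(g)$ holds for each of the finitely many rules in $R_p\cup R_a\cup R_s\cup R_t$. I would verify this by computing $\varphi$ of the two sides of every rule and comparing them in the product order on $\Monoid^{n}$. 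The comparison is uniform in the input: on each wire the difference between the word produced by the source circuit and that produced by the target circuit is input-independent --- either a fixed shortening, or a fixed rearrangement of trailing letters --- so for every rule either the two maps coincide or $\varphi(g)(\vec{x})<_{\Monoid^{n}}\varphi(f)(\vec{x})$ for all $\vec{x}$, which is exactly the $\le_2$ condition. For instance, the annihilation rules append two extra letters on each wire on the left while the right is an identity, giving a strict drop by length; in the braid rule of $R_p$ and in the sliding and swapped-Toffoli rules the two sides carry words of equal length on every wire, and the trailing letters on the right are $\le_{\Monoid}$ those on the left because $\ToffoliM<_{\Monoid}\rightM<_{\Monoid}\leftM$. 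This componentwise bookkeeping over all generating rules is the main obstacle; once it is in place, closure of $\Move$ under the three $3$-compositions (the Proposition's proof already shows $\star_0,\star_1$ increasing, and $\star_2$ uses transitivity of $\ge_2$) and well-definedness modulo $\sim$ are formal, and uniqueness follows from freeness together with the thinness of $M_3$.
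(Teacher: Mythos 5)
Your proposal is correct and takes essentially the same route as the paper: both rest on the universal property of the free $3$-category, so that $\varphi$ extends uniquely (and is well defined on the quotient because $\Move$ satisfies the $3$-category equations) once the generator images are checked to be well-typed cells of $\Move$. The only difference is bookkeeping: you verify inside the lemma that $\varphi(\source_2 r)\ge_2\varphi(\target_2 r)$ for every rule $r$, which is indeed needed for the image to lie in $M_3$, whereas the paper disposes of the lemma with a one-sentence appeal to freeness and source/target preservation, deferring exactly that componentwise verification to the case analysis in the proof of Theorem~\ref{theorem: termination1}.
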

\begin{lemma}[Termination Lemma]
\label{lemma-termination}
	If the free $3$-category $\ToffoliCategory$ has a $3$-functor to $\Move$ which is strict on all generators for $3$-cells, then
all chains of non-identical $3$-cells in $\ToffoliCategory$ terminate.
\end{lemma}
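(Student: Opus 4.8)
The plan is to argue by contradiction: I would transport a hypothetical infinite reduction chain in $\ToffoliCategory$ across $\varphi$ into an infinite strictly $<_2$-decreasing sequence of $2$-cells of $\Move$, which cannot exist because $<_2$ is well-founded. To set up the chain, suppose toward a contradiction that there is an infinite sequence of non-identical $3$-cells $\alpha_0, \alpha_1, \alpha_2, \ldots$ that are $\star_2$-composable, i.e. $\target_2(\alpha_i) = \source_2(\alpha_{i+1})$ for all $i$. Writing $f_i = \source_2(\alpha_i)$, this is a sequence of circuits with $\alpha_i : f_i \Rrightarrow f_{i+1}$ each a genuine (non-identity) rewriting step. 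Since all the reduction rules in $R_3$ preserve the number of input/output wires, every $f_i$ is a $2$-cell with one and the same arity $n$.

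Next I would apply the functor and invoke strictness. By hypothesis $\varphi$ is strict on all $3$-cell generators, so by the earlier Proposition (strictness on generators extends to strictness on all $3$-cells) $\varphi$ is strict. Using source/target preservation together with the defining equation $\varphi(r) = \langle \varphi(\source_2 r), \varphi(\target_2 r) \rangle$, each step is sent to
\[
\varphi(\alpha_i) = \langle \varphi(f_i), \varphi(f_{i+1}) \rangle,
\]
and functoriality gives $\varphi(\target_2 \alpha_i) = \varphi(\source_2 \alpha_{i+1})$, so the images again form a $\star_2$-composable chain in $\Move$. Because each $\alpha_i$ is non-identical and $\varphi$ is strict, each $\varphi(\alpha_i)$ is a non-identical $3$-cell of $\Move$.

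I would then read off the decreasing sequence and conclude. By the definition of the $3$-cells of $\Move$, the pair $\langle \varphi(f_i), \varphi(f_{i+1}) \rangle$ being non-identical means exactly $\varphi(f_i) >_2 \varphi(f_{i+1})$. Since every $f_i$ has the same arity $n$, all the $\varphi(f_i)$ are maps $\Monoid^n \rightarrow \Monoid^n$, hence elements of a single fibre of $M_2$, and we obtain an infinite strictly $<_2$-decreasing sequence
\[
\varphi(f_0) >_2 \varphi(f_1) >_2 \varphi(f_2) >_2 \cdots
\]
of $2$-cells over $\Monoid^n$. The Remark records that $<_2$ is well-founded (each strict decrease forces a strict decrease of the values at $\vec{\epsilon}$ in $\Monoid^n$, and $(\Monoid,<_{\Monoid}) \cong (\N,<)$), so no such infinite sequence can exist. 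This contradiction shows that the original chain is finite, proving termination.

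The main obstacle — indeed the only delicate point — is ensuring the hypothesis is strong enough to force every image $\varphi(\alpha_i)$ to be \emph{genuinely} non-identical rather than merely $\le_2$; this is precisely what strictness delivers, and the step that makes it usable is the Proposition upgrading strictness-on-generators to strictness everywhere. One should also verify that the transported chain stays within the single fibre of maps $\Monoid^n \rightarrow \Monoid^n$, so that the well-founded order $<_2$ of the Remark applies verbatim; this holds because rewriting preserves the I/O arity of circuits.
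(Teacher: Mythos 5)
Your proof is correct and takes essentially the approach the paper intends: the paper states this lemma without spelling out a proof, and your argument assembles exactly the two ingredients it prepares for this purpose, namely the Proposition upgrading strictness on $3$-cell generators to strictness on all $3$-cells, and the Remark that $<_2$ is well-founded (via evaluation at $\vec{\epsilon}$ and $(\Monoid,<_{\Monoid})\cong(\N,<)$). Your additional care about the chain staying in one fibre $\Monoid^n\rightarrow\Monoid^n$ is sound (it also follows directly from globularity of $3$-cells), so there is nothing to correct.
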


$\ToffoliCategory$ is freely generated from a free $3$-category, therefore $\varphi$
is entirely and uniquely defined by the assignments on the generators of $\ToffoliCategory$, provided we recursively
check that the $3$-functor $\varphi$ preserves sources and targets, and sources and targets of sources.


\section{A Termination Result for Reversible Boolean Circuits}
\label{section-termination}

\begin{theorem}
	\label{theorem: termination1}
	The free $3$-category of Reversible Boolean Circuits terminates (all reduction sequences are finite).
\end{theorem}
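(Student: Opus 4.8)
The plan is to deduce the theorem from the Termination Lemma, whose sole hypothesis is the existence of a $3$-functor $\ToffoliCategory \to \Move$ that is strict on all $3$-generators. The functor is already available: by the Lemma following Definition~\ref{definition-functor}, the assignment $\varphi$ extends uniquely to a $3$-functor. Hence the whole proof reduces to checking that $\varphi$ sends every generating rule $r : f \Rrightarrow g$ in $R_3 = R_p \cup R_a \cup R_s \cup R_t$ to a \emph{non-identical} $3$-cell, i.e.\ that $\varphi(r) = \langle \varphi(f), \varphi(g) \rangle$ satisfies $\varphi(f) >_2 \varphi(g)$ in the strictly pointwise order. Granting this, the Proposition on strictness promotes strictness-on-generators to strictness on all $3$-cells, so every genuine rewrite step is interpreted as a strict drop of the measure; the Termination Lemma then yields that every chain of non-identical $3$-cells in $\ToffoliCategory$ is finite, and since a reduction sequence is precisely such a $\star_2$-chain (each genuine rewrite has distinct source and target circuit), all reduction sequences are finite.

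The verification is a finite, rule-by-rule calculation that I would split into an easy family and a delicate family. The easy family consists of the annihilation rules $R_a$ together with the first permutation rule (two swaps collapsing to an identity): in each of these the source appends two move-letters to every wire whereas the target is an identity, so $\varphi(f)(\vec x)$ is strictly longer than $\varphi(g)(\vec x)$ in every component. Since $(\Monoid,<_\Monoid)$ orders words by length first, $\varphi(f) >_2 \varphi(g)$ is then immediate.

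The delicate family is the Yang--Baxter permutation, the six sliding rules of $R_s$, and the swapped-Toffoli rule $R_t$. For all of these the two sides realize the same permutation and append the same letters to each wire, so $\varphi(f)(\vec x)$ and $\varphi(g)(\vec x)$ have equal length in every component, and the strict decrease must be read off from the lexicographic tie-break fixed by $\ToffoliM <_\Monoid \rightM <_\Monoid \leftM$. For $R_t$, for instance, one computes that the source maps $(a,b,c)$ to $(b\leftM\ToffoliM,\, a\rightM\ToffoliM,\, c\ToffoliM)$ and the target maps it to $(b\ToffoliM\leftM,\, a\ToffoliM\rightM,\, c\ToffoliM)$; the first two components of the source are lexicographically larger because $\leftM,\rightM >_\Monoid \ToffoliM$, while the third component is unchanged, so by the product order $\varphi(f)(\vec x) >_{\Monoid^{3}} \varphi(g)(\vec x)$ at every input. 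The sliding rules and Yang--Baxter fit the same template: performing the crossing \emph{before} the gate places the heavy letters $\leftM,\rightM$ ahead of the light letter $\ToffoliM$, which is exactly what raises the word in the order.

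I expect the delicate family to be the only real obstacle, for two intertwined reasons. First, strictness here is genuinely partial at the level of individual wires: some components stay equal, so the argument works only because $<_2$ is built on the \emph{product} order $<_{\Monoid^n}$ (componentwise $\le$, strict in at least one coordinate) rather than on a coordinatewise-strict order; I would therefore verify, for each such rule, that no component of the source falls strictly below the target. Second, the lexicographic comparison must hold uniformly in the symbolic prefixes $a,b,c$ (arbitrary words): this is safe because the two words being compared share a common prefix and first diverge at a fixed pair of move-letters whose order is settled once and for all by $\ToffoliM <_\Monoid \rightM <_\Monoid \leftM$. Once every rule in $R_p, R_a, R_s, R_t$ has been dispatched this way, strictness on generators is established and the Termination Lemma closes the argument.
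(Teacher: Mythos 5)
Your proposal is correct and takes essentially the same route as the paper: reduce to the Termination Lemma, then verify rule by rule that $\varphi$ sends each generator of $R_p \cup R_a \cup R_s \cup R_t$ to a strictly decreasing pair, using length for the annihilation-type rules and the lexicographic tie-break ($\ToffoliM <_\Monoid \rightM <_\Monoid \leftM$, e.g.\ $\leftM\ToffoliM >_\Monoid \ToffoliM\leftM$) for the permutation, sliding, and swapped-Toffoli rules. Your explicit computation for $R_t$ and your remark that the product order $<_{\Monoid^n}$ tolerates equal components (needed since some wires are unaffected) agree with the paper's own calculations.
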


\begin{proof}[of Termination for Reversible Boolean Circuits with the Toffoli base]
By Lemma \ref{lemma-termination}, we have to prove that $\varphi$ is strict on all generators for $3$-cells.
By definition of strictness, it is enough to prove that for all $\alpha\in R_{3}$ in $\ToffoliCategory$
we have that $\varphi(\alpha)$ is not an identity. 
By definition, $\varphi(\alpha)$ is equal to the pair $\langle \varphi(\source_{2}\alpha), \varphi(\target_2\alpha) \rangle$.
$\varphi(\alpha)$ is not an identity if $\varphi(\source_{2}\alpha)  >_2 \varphi(\target_2\alpha)$.
Suppose $\source_1(\alpha) = \Monoid^n$. By definition of the pointwise order on $2$-cells of $\Move$, we have to prove that
 $\varphi(\source_{2}\alpha)(v_1, \ldots, v_n) >_{\Monoid^n} \varphi(\target_2\alpha)(v_1, \ldots, v_n)$
 for all $ v_1, \ldots, v_n \in \Monoid$.


\begin{itemize}
\item
\textbf{Permutation rules.}
		\[
		\begin{array}{cccc}
			\varphi\left(\:\twocell{(sw *0 1) *1 (1 *0 sw) *1 (sw *0 1)}\:\right)(v,w,z) & & \varphi\left(\:\twocell{(1 *0 sw) *1 (sw *0 1) *1 (1 *0 sw)}\:\right)(v,w,z)\\
			\shortparallel & & \shortparallel\\
			(z \leftM \leftM, w \leftM \rightM , v \rightM \rightM )  & >_{\Monoid^n} & (z \leftM \leftM, w \rightM \leftM , v \rightM \rightM )
		\end{array}
		\]
The thesis follows from $\leftM\rightM > \rightM\leftM$.

		\item
\textbf{Annihilation rules.}
		\[
		\begin{array}{cccc}
			\varphi\left(\:\twocell{t3 *1 t3}\:\right)(v,w,z) & & \varphi\left(\:\twocell{3}\:\right)(v,w,z)\\
			\shortparallel & & \shortparallel\\
			(v \ToffoliM \ToffoliM, w \ToffoliM \ToffoliM, z \ToffoliM \ToffoliM) & >_{\Monoid^n} & (v,w,z)
		\end{array}
		\]
 Words on the right-hand-size are all shorter and therefore smaller.

		\item \textbf{Left-to-Right Sliding rules.}

	The reduction moving the Toffoli gate upwards and to the right swaps the lowest letter of the monoid $\ToffoliM$, with a higher letter, $\leftM$, in all movements $v,w,z,t$ but $v$. Below, we consider the case of the circuit Toffoli $3$.
		\[
		\begin{array}{cccc}
			\varphi\left(\:\twocell{lad2' *1 (t3 *0 1)}\:\right)(v,w,z,t) & & \varphi\left(\:\twocell{(1 *0 t3) *1 lad2'}\:\right)(v,w,z,t)\\
			\shortparallel & & \shortparallel\\
		(w \leftM \ToffoliM,  z \leftM \ToffoliM, t \leftM \ToffoliM,  \ v \rightM \rightM \rightM)
			& >_{\Monoid^n} &
		(w \ToffoliM \leftM ,  z \ToffoliM \leftM , t  \ToffoliM \leftM, \  v \rightM \rightM \rightM)
		\end{array}
		\]
	The thesis follows from $\leftM  \ToffoliM >_\Monoid \ToffoliM \leftM $. The Right-to-Left Sliding rules are the mirror case.
	\item \textbf{Swapped Toffoli.}
	This case follows from $\leftM  \ToffoliM > \ToffoliM \leftM $ and $\rightM \ToffoliM > \ToffoliM \rightM$. 
	\end{itemize}
\end{proof}
\section{Conclusion}
\label{section:Conclusion}
This work explores $3$-categories as a unified formal framework for modeling two concepts. First, it examines Reversible Boolean Circuits, which are treated as diagrams generated from a base through the iterative application of series/parallel compositions. Then, it investigates the termination of a rewriting system on these circuits.
\textbf{Toff} is the free $3$-category which effectively formalizes circuits and rewriting rules.
\textbf{Move} is the $3$-category supplying the well founded-order in a monoid of strings.
Termination follows from interpreting \textbf{Toff} into \textbf{Move} by means of a (strict) functor $\varphi$ that intuitively represent traces of moves inside the circuit being rewritten, by exploiting the common $3$-category structure.
A possible extension of this work is to evaluate its effectiveness in proving the termination of reductions of free $3$-polygraphs generated from alternative bases which include Fredkin and Peres gates.
A second development must focus on confluence: rewriting in \textbf{Toff} is not. We claim that the following circuits has two non-confluent normal forms:

\begin{center}
	\begin{tikzcd}
		\twocell{(sw *0 1) *1 (t2 *0 1) *1 (1 *0 sw) *1 (sw *0 1)}    \ \ \    \mbox{ (left-sliding)}
		&
		\twocell{(sw *0 1) *1 (1 *0 sw) *1 (sw *0 1) *1 (1 *0 t2)} \arrow[l,triple] \arrow[r,triple]
		&
		\mbox{ (permutation)}   \ \ \    \twocell{(1 *0 sw) *1 (sw *0 1) *1 (1 *0 sw) *1 (1 *0 t2)}
		\enspace 
	\end{tikzcd}
\end{center}

We are pursuing various directions in order to obtain confluence. However, the problem at hand is known to be non-obvious. Guiraud warns that a 3-polygraph may generate an infinite number of critical pairs which, in specific cases, can be categorized into a finite set of patterns, eventually leading to confluence \cite{GUIRAUD2006341}.

\bibliography{bibliography}
\end{document}